\documentclass{article}
\usepackage{latexsym}
\usepackage{amsmath}
\usepackage{amssymb}
\usepackage{amsfonts}
\usepackage{graphicx}

\newtheorem{remark}{Remark}
\newtheorem{theorem}{Theorem}
\newtheorem{definition}{Definition}
\newtheorem{proposition}{Proposition}
\newenvironment{proof}[1][Proof]{\noindent\textbf{#1.} }{\ \rule{0.5em}{0.5em}}
\setlength{\textwidth}{6.5in} \evensidemargin0mm \oddsidemargin0mm



\usepackage{amssymb}
\usepackage{eurosym}
\usepackage{amsfonts}
\usepackage{amsmath}
\usepackage{graphicx}

\setcounter{MaxMatrixCols}{6}

\newtheorem{notation}[theorem]{Notation}

\bibliographystyle{spbasic}

\usepackage{latexsym}

\begin{document}

\title{Synthesizing attractors of Hindmarsh-Rose neuronal systems}
\author{Marius-F. Danca\\
\small Dept. of Mathematics and Computer Science\\
\small               Avram Iancu University\\
\small               400380, Cluj-Napoca, Romania\\
\small and \\
\small               Romanian Institute of Science and Technology,\\
\small               400487 Cluj-Napoca, Romania\\
\newline
Qingyun Wang\\
\small Dept. of Dynamics and Control\\
\small               Beihang University, Beijing 100191, China\\
\small and \\
\small               School of Statistics and Mathematics\\
\small               Inner Mongolia Finance and Economics College,
\small Huhhot 010070, \small               China}

\date{}
\maketitle

\begin{abstract}In this paper a periodic parameter switching
scheme is applied to the Hindmarsh-Rose neuronal system to
synthesize certain attractors. Results show numerically, via
computer graphic simulations, that the obtained synthesized
attractor belongs to the class of all admissible attractors for the
Hindmarsh-Rose neuronal system and matches the averaged attractor
obtained with the control parameter replaced with the averaged
switched parameter values. This feature allows us to imagine that
living beings are able to maintain vital behavior while the control
parameter switches so that their dynamical behavior is suitable for
the given environment.
\end{abstract}

\textbf{keywords}: Hindmarsh-Rose model; chaotic attractors; local
attractors; global attractors

\begin{figure*}[ht]
 \begin{center}
\includegraphics[clip,width=1\textwidth]{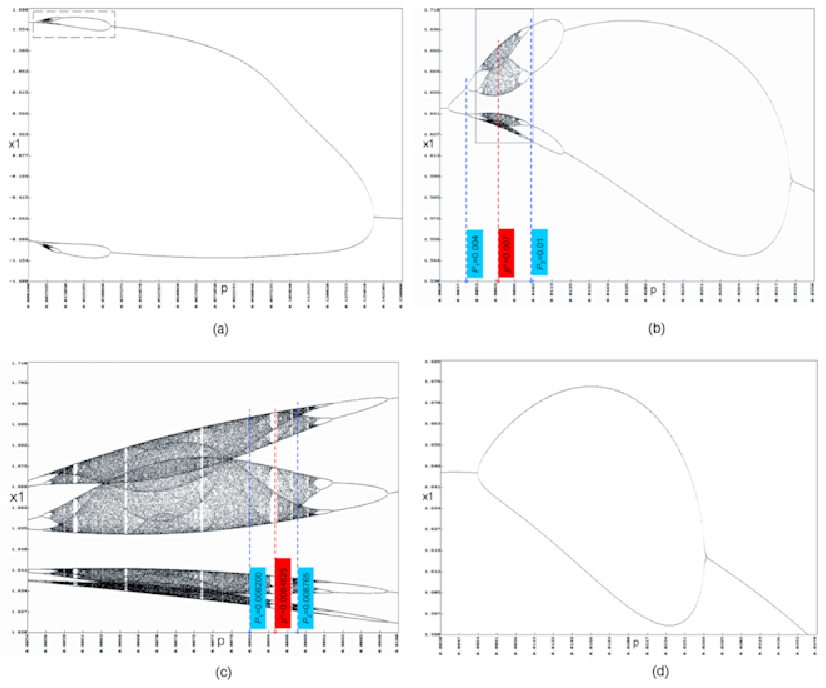} \caption{(a) Bifurcation diagram:
$x_{1}$ vs $p$ with $I=3.4$. (b) Detail of Fig. 1a. (c) Consecutive
detail of Fig. 1b. (d) Bifurcation diagram: $x_{1}$ vs $p$ with
$I=3.5.$}
\label{fig:1}       
\end{center}
\end{figure*}

\section{Introduction}

Based on the theory of dynamical systems, Hindmarsh and Rose
proposed the phenomenological neuron model, which may be seen either
as a generalization of the Fitzhugh equations or as a simplification
of the physiologically realistic model proposed by Hodgkin and
Huxley \cite{HR,HR2}. The Hindmarsh-Rose (HR) model of neuronal
activity is aimed to study the spiking-bursting behavior of the
membrane potential observed in experiments of a cell in the brain of
the pond snail \cite{HR}. The dynamics of a single HR neuron has
been studied well, and it is illustrated that it can exhibit complex
dynamical behavior including periodic and chaotic spiking (bursting)
motion when certain control parameters of nervous cell models are
varied \cite{HR,HR2,holder1}.

It is well accepted that the HR neuron model is an alternative
candidate for studying the dynamics of neuronal systems since it has
simpler mathematical forms than Chay's and Hodgkin and Huxley's
neuron models \cite{Chay,HH}. Therefore, this model has been used to
study different aspects of neuronal dynamics such as transitions
between different firing regimes \cite{HRf1}, relations between
block structured dynamics and neuronal coding \cite{HRf2}, the
effect of noise on neuronal signal transduction \cite{HRf3} and on
synchronization \cite{HRf4}, the collective dynamics of neuronal
networks \cite{WQ} and the evolution of spiral waves in coupled
Hindmarsh-Rose neurons \cite{MJ}.

However, an important phenomenon in neuron activity is the
transition between different firing patterns. From the viewpoint of
dynamical systems, it is crucial to investigate the transition
mechanism of firing patterns for understanding realistic neuronal
activities. The local stability and the numerical asymptotic
analysis of the Hindmarsh-Rose model are used to investigate the
firing evolution of a single Hindmarsh-Rose neuron \cite{Aziz}. A
simple one-dimensional map method has been applied to the HR neurons
to convert irregular spiking and chaotic bursting to regular beating
and periodic bursting \cite{Chaosc}.

Recently, we developed a parameter switching method to synthesize
attractors of a class of dynamical systems, called hereafter
attractors synthesis (AS) algorithm. This method in fact starts from
a set of given parameter values, and allows us, via periodic or
random parameter-switching, to generate any of the set of all
possible attractors of a class of continuous time dynamical systems
of integer order\cite{Danca},\cite{Danca et al}, or of fractional
order \cite{Danca Kai}, depending linearly on the control parameter.
It has been applied successfully to several dynamical systems such
as Lorenz, Chen, R\"{o}ssler, Lotka-Volterra, minimal networks,
fractional L\"{u} systems and so on. Extending this subject, we will
synthesize attractors of the HR neuronal model, which can exhibit
burst dynamics, via the AS method contributing to a better
understanding of neuronal firing transition.

The paper is organized as follows: the HR model will be described in
Sec. 2; the AS method will be introduced in Sec. 3, and attractors
will be synthesized in Sec. 4. Finally, the paper will end in Sec. 5
with some comments and conclusions.

\section{Description of Hindmarsh-Rose model}

\indent The dynamics of an isolated HR neuron is governed by the
following set of differential equations \cite{HR}:
\begin{equation}%
\begin{array}
[c]{cl}%
\overset{\cdot}{x}_{1}= & bx_{1}^{2}-ax_{1}^{3}+x_{2}-x_{3}+I,\\
\overset{\cdot}{x}_{2}= & c-dx_{1}^{2}-x_{2},\\
\overset{\cdot}{x}_{3}= & p[s(x_{1}-\overline{x}_{1})-x_{3}],
\end{array}
\label{HR}%
\end{equation}

\noindent where $x_{1}$ is the membrane potential, $x_{2}$ is
associated with the fast current, Na$^{+}$, or K$^{+}$ and $x_{3}$
with the slow current, for example, Ca$^{2+}$. The parameters are
\cite{HR}: $a=1,~b=3,~c=1,~d=5,$
$s=4$,~~$\overline{x}_{1}=-1/2(1+\sqrt{5)}\simeq-1.6$. $I$ and $p$
are the control parameters, and $I~$is a slow parameter while $p~$is
a fast parameter. $I$ mimics the membrane input current for
biological neurons; $p$ controls the speed of variation of the slow
variable.

To find the fixed points we have to solve numerically the following equations:
$x_{2}=-5x_{1}^{2}+1,$ \ $x_{3}=4x_{1}+6.4$, and $x_{1}^{3}+2x_{1}^{2}%
+4x_{1}+2=0$ which gives the single real solution $x_{1}=-0.639,~x_{2}%
=-1.041,~x_{3}=3.8442.$ Therefore the system (\ref{HR}) has a single
equilibrium point: $X^{\ast}=\left(  -0.639,-1.041,3.844\right)
^{T}$

Parameter $p~$ is the ratio of time scales between fast dynamics and
slow dynamics. Therefore, it controls the difference between the
slow and the fast dynamics of HR neuron model corresponding to the
difference between fast fluxes of ions across the membrane and slow
ones. Therefore, it is really interesting to investigate the
dynamics of the HR neuron as the parameter $p~$is changed. However,
with $I$ as control parameter, the underlying dynamical system does
not belong to the class of systems where AS can be applied [see
Section 3].

The bifurcation diagram of the first component $x_{1}$ versus $p~$
is shown in Fig. 1a,b,c for $I=3.4$, while that for $I=3.5$ in Fig.
1d. We are interested in the case $I=3.4$ because this case
represents interesting dynamics, where chaos appears in a narrow
range of $p$ with a width of about $3.5\times10^{-3}$. Fig. 1a gives
the bifurcation diagram with respect to the control parameter $p~$
in the range [0.0001, 0.15]. It is shown that as the parameter $p~$
is changed, the HR neuron firstly bifurcates from period-doubling to
chaos, and then, it is stopped via the inverse period-doubling. For
a clear vision, the enlargement of Fig. 1a is shown in Fig. 1b,
which further confirms the above observation. Further enlargement
can clearly guide our assessment as illustrated in Fig. 1c.

More details on the dynamics of the HR model can be found in
\cite{Aziz}

\section{Attractors synthesis algorithm}

\noindent The AS algorithm can be applied to the following general
class of continuous-time autonomous and dissipative dynamical
systems, modeled by the
Initial Value Problem (IVP) \cite{Danca et al}%
\begin{equation}
~S:~\dot{x}=f_{p}(x),\quad x(0)=x_{0}, \label{IVPa}%
\end{equation}
\noindent where $f_{p}\,$ is an $\mathbb{R}^{n}$-valued function with a
bifurcation parameter $p\in\mathbb{R}$, $n\geq3$, and has the expression%
\begin{equation}
f_{p}(x\mathbf{)=}g(x\mathbf{)}+pAx, \label{IVPb}%
\end{equation}
\noindent $g:\mathbb{R}^{n}\longrightarrow\mathbb{R}^{n}~\ $is a
continuous-time nonlinear function, $A~$is a real constant $n\times
n$ matrix,$~x_{0}\in\mathbb{R}^{n}$, and $t\in\lbrack0,T).$

This class of dynamical systems contains the best-known systems such
as Lorenz, Chen, R\"{o}ssler, Lotka-Volterra, minimal networks,
fractional L\"{u} systems and so on (see \cite{Danca}\cite{Danca et
al}\cite{Danca Kai}).

In \cite{Danca et al} it has been shown, via numerical analysis and
computer graphic simulations, that the AS algorithm allows the
synthesis of any attractor of $S$ by parameter switching following
some designed rule.

The AS algorithm can explain what happens with a system when,
intentionally or not, the parameter value switches quickly through a
set of values. Thus, when $p$ is switched following some designed
deterministic \cite{Danca et al} or random \cite{Danca} rule, while
the system evolves in time, an attractor belonging to the set of
attractors is generated (synthesized).

The AS algorithm can be useful in the cases where some desired
control parameter value cannot be directly accessed and we want to
obtain the corresponding attractor.

\noindent Let us next suppose the following assumptions hold.

\noindent\textbf{Assumption} \textbf{A.1. }Throughout, the existence
and uniqueness of solutions of the IVP (\ref{IVPa})-(\ref{IVPb}) are
assumed.

As known, the computer graphic simulations of the numerical integration
results of (\ref{IVPa})-(\ref{IVPb}) can give excellent approximations to the
orbits within the invariant sets \cite{Stuart}. Thus, the orbits which start
near a hyperbolic attractor will stay near and they will be shadowed by orbits
within the attractor. This happens because attractors arise as the limiting
behavior of orbits. Therefore, the shadowing property of hyperbolic sets
\cite{Coombes} enables us to recover long time approximation properties of
numerical orbits such as HR's case.

The AS algorithm consists in using a time varying, or more precisely, a
periodically switching parameter, according to some design rule. It will be
shown, empirically by various experiments, that a desired targeted attractor
can be duly obtained by the proposed switching scheme.

\begin{remark}
The algorithm is robust to some extent: the switching timing and
switching parameter values both allow flexibility. Therefore, they
do not need to be very precisely determined.
\end{remark}
Hereafter the following notations will be used

\begin{notation}
Let us denote by $\mathcal{A}$ the set of all global attractors of
$S$, including attractive stable fixed points, limit cycles and
chaotic (possibly strange) attractors; $\mathcal{P\subset
}\mathbb{R}$ the ordered set of the corresponding admissible values
of $p$~and $\mathcal{P}_{N}=\{p_{1},p_{2},\ldots,p_{N}\}\subset$
$\mathcal{P}\,$ a finite ordered subset of $\mathcal{P}$ containing
$N$ different values $p$, which
determines the set of attractors $\mathcal{A}_{N}=\{A_{p_{1}},A_{p_{2}}%
,\ldots,A_{p_{N}}\}\subset\,\mathcal{A}$.
\end{notation}

\noindent\textbf{Assumptions }

\noindent\textbf{A.2. }$\mathcal{P}$ is considered, as in most
cases, to be composed of
a single real interval and all the values of $\mathcal{P}_{N}=\{p_{1}%
,p_{2},\ldots,p_{N}\}~$for which the system behaves stable and/or
chaotic is assumed to be accessible.

\noindent\textbf{A.3. }$S$ is dissipative i.e.
$\bigtriangledown\cdot f<0,$ where $\bigtriangledown\cdot
f\equiv\sum\limits_{i=1}^{n}\partial f\left(
x_{1},x_{2},\ldots,x_{n}\right)  /\partial x_{i}$ (see e.g.
\cite{Ott}).

Due to the assumed dissipativity, $\mathcal{A}$ is
non-empty\footnote{Attractor sets can exist only for dissipative
systems because shrinking of the volume in phase space for
conservative systems is ruled out by Liouville theorem} and it
follows naturally that for the considered class of systems, a
bijection may be defined between the sets $\mathcal{P}$ and
$\mathcal{A}$. Thus, giving any $p\in\mathcal{P}$, a unique global
attractor is specified, and vice versa.

\begin{remark}
Because in this paper computer simulations are used as the major
analytical tool, the $\omega$-\emph{limit set} (actually, its
approximation \cite{Foias}) is considered after neglecting a
sufficiently long period of transients. Therefore, by
\emph{attractors} (background on the notion of attractor can be
found in \cite{Milnor}) it is appropriate to understand in this
paper the $\omega $-\emph{limit set} obtained by a numerical method
for ODEs with fixed step size $h$ after the transients were
neglected.
\end{remark}

Let $\mathcal{P}_{N}=\{p_{1},p_{2},\ldots,p_{N}\}.$ The AS algorithm relies on
the following deterministic time rule applied repeatedly on $I$%

\begin{equation}
\lbrack(m_{1}h)p_{1},~(m_{2}h)p_{2},\ldots,(m_{N}h)~p_{N}], \label{scheme}%
\end{equation}

\noindent where the weights $m_{i}~$are some positive integers. The
algorithm acts as follow: in the first time subinterval of length
$m_{1}h$, $p$ will have the value $\ p_{1}~$(i.e. the
IVP(\ref{IVPa})-(\ref{IVPb}) will be integrated $m_{1}$ steps for
$p=p_{1}$), for the next $m_{2}^{{}}$ integration steps, $p=p_{2}$
and so on until the $N$-th time subinterval of length $m_{N}^{{}}h$
where $p=p_{N}$ and then the algorithm repeats. In order to simplify
the notation in (\ref{scheme}), for a fixed step size $h$, the
scheme
(\ref{scheme}) will be denoted next%

\begin{equation}
\lbrack m_{1}p_{1},~m_{2}p_{2},\ldots,m_{N}~p_{N}]. \label{schema simpla}%
\end{equation}

For example, the scheme $\left[  p_{3},3p_{1},2p_{2}\right]  $ represents the
infinite sequence of $p:$ $p_{3},3p_{1},2p_{2},p_{3},3p_{1},2p_{2}%
,\ldots~\ \ $which means that while the considered numerical method integrates
(\ref{IVPa})-(\ref{IVPb}), $~p$ switches in each $m_{i}h$ time subinterval. In
other words, the numerical method will integrate (\ref{IVPa})-(\ref{IVPb}) one
step with $p=p_{3},$ then three times with $p=p_{1},$ then two steps with
$p=p_{2}$ and so on.

The AS algorithm for the scheme (\ref{schema simpla}) is presented in Fig. 2

\begin{figure}
\[%
\begin{array}
[c]{l}%
t=0\\
repeat\\
~\ \ \ \ for\quad k=1~to~N~do\\
~\ \ \ \ \ \ \ \ \ \ p=p_{k}\\
~~~~~~~~~~\ for~i=1~to~m_{k}~do\\
~~~\ ~~~~~~~~~~~~~integrate~(\ref{IVPa})-(\ref{IVPb})\\
~~~~~~~~~~~~~~~~~t=t+h\\
~~~~~~~~~~~end\\
~~~~~end\\
until~t\geq T
\end{array}
\]
 \begin{center}
\caption{Pseudocode of AS algorithm.}
\label{fig:2}       
\end{center}
\end{figure}

\begin{notation}
Let us denote the \emph{synthesized} attractor obtained with AS algorithm, via
(\ref{schema simpla}), by $A^{\ast}~$and by $A_{p^{\ast}}$ the \emph{averaged
}attractor corresponding to
\end{notation}

\begin{equation}
p^{\ast}=\frac{\sum\limits_{k=1}^{N}p_{k}m_{k}}{\sum\limits_{k=1}^{N}m_{k}}.
\label{p*}%
\end{equation}

Next, the following notion is introduced

\begin{definition}
Two attractors are considered to be \emph{identical} if their
underlying $\omega$-\emph{limit sets }coincide in the phase space,
the identity being considered from a geometric point of view in the
phase space, aided by computer graphic analysis.
\end{definition}

This identity in the case of chaotic attractors will be considered
only asymptotically since they are fully depicted only after an
infinite time.

\begin{proposition}
The synthesized attractor $A^{\ast}$ and the averaged attractor $A_{p^{\ast}}$
are identical
\end{proposition}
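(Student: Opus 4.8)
The plan is to regard the switched dynamics as a fast periodic perturbation of a single fixed‑parameter copy of $S$ and to identify, by an averaging argument, that fixed parameter with $p^{\ast}$ of (\ref{p*}). Running the scheme (\ref{schema simpla}) is the same as integrating the non‑autonomous system $\dot{x}=f_{\sigma(t)}(x)$, where $\sigma:[0,T)\to\{p_{1},\ldots,p_{N}\}$ is the piecewise‑constant, $T_{s}$‑periodic switching signal equal to $p_{k}$ on a subinterval of length $m_{k}h$, with period $T_{s}=h\sum_{k=1}^{N}m_{k}$. The structural fact that makes everything work is the affine dependence (\ref{IVPb}): $f_{\sigma(t)}(x)=g(x)+\sigma(t)Ax$, so the only time dependence enters through the scalar $\sigma(t)$ multiplying $Ax$. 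Averaging $\sigma$ over one period gives exactly $T_{s}^{-1}\int_{0}^{T_{s}}\sigma(t)\,dt=\bigl(\sum_{k}m_{k}p_{k}\bigr)/\bigl(\sum_{k}m_{k}\bigr)$, which is precisely the quantity $p^{\ast}$ in (\ref{p*}).

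Consequently the averaged vector field associated with the switched system is
\[
\bar{f}(x)=\frac{1}{\sum_{k}m_{k}}\sum_{k=1}^{N}m_{k}\,f_{p_{k}}(x)=g(x)+p^{\ast}Ax=f_{p^{\ast}}(x),
\]
that is, the autonomous system $S$ with $p=p^{\ast}$. It is here that linearity in the switched parameter is indispensable, which also accounts for the earlier observation that the construction cannot be carried out with $I$ playing the role of control parameter. The same identity can be read off directly at the level of the numerical scheme: composing the one‑step maps of the ODE solver over one full cycle of (\ref{schema simpla}) agrees, to leading order in $h$, with a single macro‑step of length $h\sum_{k}m_{k}$ of the same solver applied to $f_{p^{\ast}}$. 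Hence, by the classical periodic‑averaging principle, the orbit generated by the AS algorithm remains within an $O(h)$ neighbourhood of an orbit of $S$ with $p=p^{\ast}$ on every compact time interval, once transients are discarded in accordance with the paper's convention on $\omega$‑limit sets.

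It then remains to pass from this finite‑time closeness to coincidence of the underlying $\omega$‑limit sets. By Assumption~A.3 both the switched and the averaged dynamics possess a compact absorbing set, so $\mathcal{A}$ is non‑empty and the long‑time behaviour is confined to a fixed compact region. Restricting attention to the hyperbolic attractors of $S$, the shadowing property of hyperbolic sets already invoked above (see \cite{Coombes,Stuart}) upgrades the finite‑time estimate: every numerically computed orbit of the switched system is shadowed by a genuine orbit lying in a prescribed neighbourhood of $A_{p^{\ast}}$, while conversely the switched $\omega$‑limit set cannot escape that neighbourhood. Since, by Definition~1, identity of attractors is to be understood geometrically with the aid of computer graphics, and the residual discrepancies are of order $h$ and hence below graphical resolution, one concludes $A^{\ast}=A_{p^{\ast}}$; for chaotic attractors the identity is asymptotic, in agreement with the comment following Definition~1.

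The step I expect to be the real obstacle is exactly this last passage — from averaging on $O(1)$ time windows to a statement about $\omega$‑limit sets. Classical averaging controls solutions only on horizons of order $1/h$, and in general nothing forces the $\omega$‑limit sets of a fast‑switched system and of its average to coincide without additional hypotheses, such as normal hyperbolicity of the attractor or an upper/lower semicontinuity result for attractors under singular fast perturbations. This is why the argument ultimately leans on the shadowing property of hyperbolic sets together with numerical and graphic verification, and why the conclusion is stated for the numerically obtained $\omega$‑limit set rather than for the exact attractor.
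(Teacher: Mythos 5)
Your proposal is correct to the same (ultimately empirical) standard as the paper, but it takes a genuinely different route. The paper's own proof is much thinner: it notes that $p^{\ast}=\sum_{k}\alpha_{k}p_{k}$ with $\alpha_{k}=m_{k}/\sum_{i}m_{i}$ is a convex combination, so $p^{\ast}$ lies in the interval spanned by $\mathcal{P}_{N}$; it then invokes the bijection between $\mathcal{P}$ and $\mathcal{A}$ to claim that the ``same convex structure'' is preserved in $\mathcal{A}$, so that $A_{p^{\ast}}\in\left(A_{p_{1}},A_{p_{N}}\right)$, and the actual coincidence $A^{\ast}=A_{p^{\ast}}$ is simply stated to be ``showed, aided by numerical analysis and via computer graphics.'' In other words, the paper locates the attractors within the ordered family and delegates the identity itself to simulation. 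You instead supply a mechanism: since $f_{p}(x)=g(x)+pAx$ is affine in $p$, averaging the switching signal over one period of the scheme (\ref{schema simpla}) produces exactly the dwell-time-weighted mean $p^{\ast}$ of (\ref{p*}), so the averaged vector field is literally $f_{p^{\ast}}$; this explains why the particular weights in (\ref{p*}) are the right ones and why $I$ cannot serve as the switched parameter --- neither of which the paper's convexity argument addresses. Your honest caveat identifies precisely where rigour stops: classical averaging controls orbits only on finite horizons, and passing to $\omega$-limit sets needs extra hypotheses, so you fall back on shadowing of hyperbolic sets plus graphical resolution. The paper does no better --- it leans on the same shadowing remarks from Section 3 and on computer graphics at exactly the same spot --- so both arguments are empirical at the decisive step, but yours carries strictly more mathematical content on the way there, while the paper's version has the modest advantage of directly feeding the subsequent proposition that $A^{\ast}\in\left(A_{p_{1}},A_{p_{N}}\right)$.
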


\begin{proof}
Let us consider some subset $\mathcal{P}_{N}$. If we denote $\alpha_{k}%
=m_{k}/\sum\limits_{i=1}^{N}m_{i},$ it is easy to see that
$p^{\ast}$ is a convex combination
$p^{\ast}=\sum\limits_{k=1}^{N}\alpha_{k}p_{k},$ since $\
\sum\limits_{k=1}^{N}\alpha_{k}=1.$ Therefore $p^{\ast}$ belongs
within the interval $\left(  p_{1},\ldots,p_{N}\right)  $, whatever
the values $p_{i}$ are chosen. \noindent Also, taking into account
the bijection between $\mathcal{P}$ and $\mathcal{A}$, we are
entitled to consider that the same convex structure is preserved
into $\mathcal{A}.$ Therefore, for whatever switched values of $p$
in some subset $\mathcal{P}_{N}$, $A^{\ast}$ will belong within the
set $\mathcal{A}_{N}$ $~$considered to be ordered by the mentioned
bijection i.e. $A_{p^{\ast}}\in\left(  A_{p_{1}},A_{p_{N}}\right) $.
\ Next, aided by numerical analysis and via computer graphics, it
can be showed that $A^{\ast}$ is identical (in the sense defined
above) to $A_{p^{\ast}}.$Therefore $A^{\ast}\in\left(
A_{p_{1}},A_{p_{N}}\right)  .$
\end{proof}

Next, we can formulate the main property of the AS algorithm

\begin{proposition}
For whatever considered set $\mathcal{P}_{N}$, the AS algorithm generates an
attractor $A^{\ast}~$which belongs to $\left(  A_{p_{1}},A_{p_{N}}\right)  .$
\end{proposition}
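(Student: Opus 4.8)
The plan is to derive this statement directly from Proposition 1 together with the observations already assembled in its proof; no new machinery is needed. First I would invoke Proposition 1: for the switching scheme (\ref{schema simpla}) applied to $S$, the synthesized attractor $A^{\ast}$ is identical, in the sense of Definition 1, to the averaged attractor $A_{p^{\ast}}$ corresponding to the value $p^{\ast}$ given by (\ref{p*}). So it suffices to locate $A_{p^{\ast}}$ inside the ordered set $\left(A_{p_{1}},A_{p_{N}}\right)$.

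Next I would reuse the elementary convexity argument from the proof of Proposition 1. Putting $\alpha_{k}=m_{k}/\sum_{i=1}^{N}m_{i}$, each $\alpha_{k}$ is positive and $\sum_{k=1}^{N}\alpha_{k}=1$, hence $p^{\ast}=\sum_{k=1}^{N}\alpha_{k}p_{k}$ is a convex combination of the switched parameter values and therefore lies in the interval $\left(p_{1},\ldots,p_{N}\right)$, whatever subset $\mathcal{P}_{N}\subset\mathcal{P}$ is chosen. Since $\mathcal{P}$ is a single real interval (Assumption A.2), $p^{\ast}\in\mathcal{P}$, so by the bijection between $\mathcal{P}$ and $\mathcal{A}$ furnished by dissipativity (Assumption A.3) a unique global attractor $A_{p^{\ast}}\in\mathcal{A}$ is specified; carrying the order of $\mathcal{P}$ across this bijection places $A_{p^{\ast}}$ within $\left(A_{p_{1}},A_{p_{N}}\right)$. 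Combining this with $A^{\ast}=A_{p^{\ast}}$ from the first step yields $A^{\ast}\in\left(A_{p_{1}},A_{p_{N}}\right)$, which is the assertion.

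The only delicate point — and it is inherited from Proposition 1 rather than being specific to this statement — is the claim that the convex/ordering structure of $\mathcal{P}$ is faithfully transferred to $\mathcal{A}$ through the bijection, and, relatedly, that $A^{\ast}$ and $A_{p^{\ast}}$ really coincide. Neither is proved analytically here; both rest on the numerical integration and computer-graphic comparisons, with the identification understood only asymptotically in the chaotic case. Everything else reduces to the one-line convexity computation above, so beyond that heuristic step I do not anticipate any obstacle.
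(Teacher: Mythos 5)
Your proposal is correct and takes essentially the same route as the paper: Proposition 2 is presented there as an immediate consequence of the proof of Proposition 1, whose final lines already combine the convexity of $p^{\ast}=\sum_{k}\alpha_{k}p_{k}$, the bijection between $\mathcal{P}$ and $\mathcal{A}$, and the (numerically established) identity $A^{\ast}=A_{p^{\ast}}$ to conclude $A^{\ast}\in\left(A_{p_{1}},A_{p_{N}}\right)$. You also rightly flag, as the paper implicitly does, that the transfer of the ordering/convex structure to $\mathcal{A}$ and the identification of $A^{\ast}$ with $A_{p^{\ast}}$ rest on computer-graphic evidence rather than analytic argument.
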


\begin{remark}
The AS algorithm is useful especially when we want to obtain some
attractor $\overline{A}_{\overline{p}}~$even the underlying value
$\overline{p}~$cannot be accessible.

In this case $\overline{A}$ can be synthesized by choosing a
corresponding set $\left( p_{1},\ldots,p_{N}\right)
\ni\overline{p}$ but
$\overline{p}\notin\mathcal{P}_{N}=\{p_{1},\ldots,p_{N}\},$ and a
corresponding scheme (\ref{schema simpla}).
\end{remark}

The initial conditions play an important role since for a specific value
$p\in\mathcal{P}$ there is a single global attractor but which could be
composed by several local attractors (see e.g. \cite{Milnor},\cite{Hirsch1}%
,\cite{Hirsch2},\cite{Kapitan}). For example, for the Lorenz system
for the control parameter $r=2.5$~there are three local attractors:
the origin and two symmetrical fixed points
$X_{1},_{2}(\pm2,~\mp2,~1.5$) while for $r=28$ there is a single
local attractor which is global too, the known Lorenz strange
attractor. To avoid these possible difficulties, all the computer
simulations for $A^{\ast}$ and $A_{p^{\ast}}$ for a particular case,
start from the same initial conditions.

\begin{figure*}[!ht]

\includegraphics[clip,width=1\textwidth]{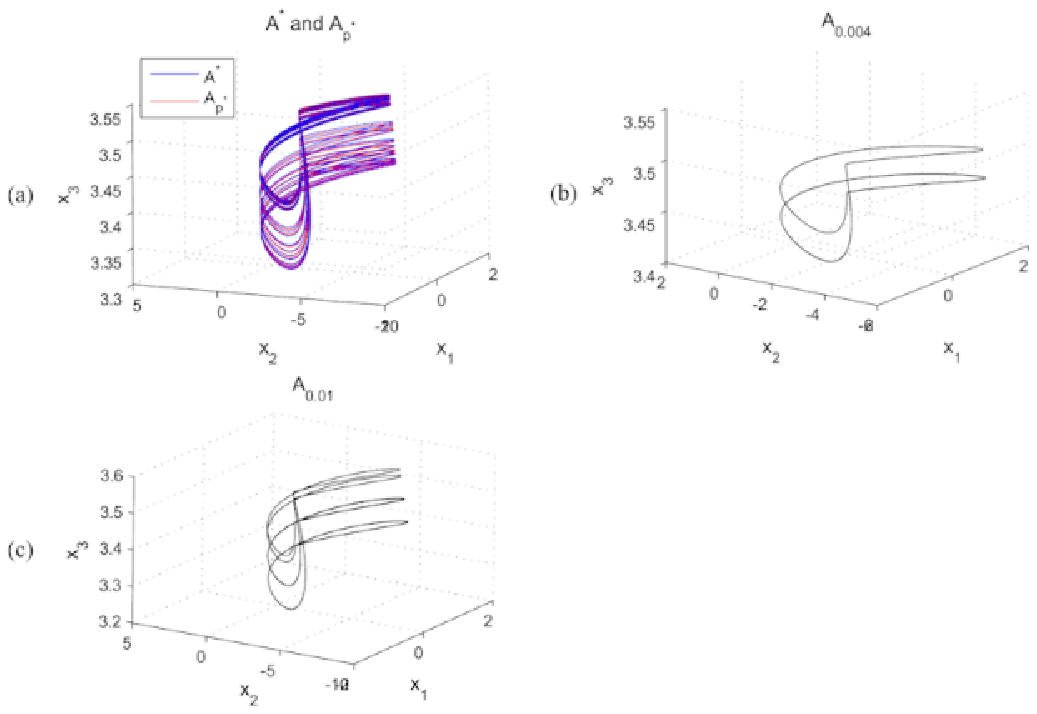}
\caption{AS algorithm with scheme $[1p_{1},1p_{2}]$ with
$p_{1}=0.004$ and $p_{2}=0.01;$ a) The synthesized and averaged
attractors $A^{\ast}$ and $A_{p^{\ast}}~$respectively, superimposed
for $p^{\ast}=0.007~$; b) Attractor $A_{0.004\text{ }};$ c)
Attractor $A_{0.01}.$}
\label{fig:3}       

\end{figure*}

In order to see how the AS algorithm works, let us consider that we
want to synthesize the attractor $\overline{A}_{\overline{p}}.$

\noindent Then we must choose $N$ and the set $\mathcal{P}_{N}$ such
that $\overline{p}\in\left( p_{1},\ldots,p_{N}\right)  $ (
$\overline{p}~$can be equal or not to one of the elements $p_{i},$
for $i=2,\ldots N-1).$ Next, choosing empirically the scheme
(\ref{schema simpla}), such that the right-hand side of (\ref{p*})
gives $\overline{p},$ the initial value problem is integrated and
finally $\overline{A}_{\overline{p}}$ is obtained.

\noindent To underline the identity between $A^{\ast}$ and
$A_{p^{\ast}}~$histograms and Poincar\'{e} sections besides the
phase plots were drawn after the transients were neglected.

\begin{figure*}[!ht]

\includegraphics[clip,width=0.53\textwidth]{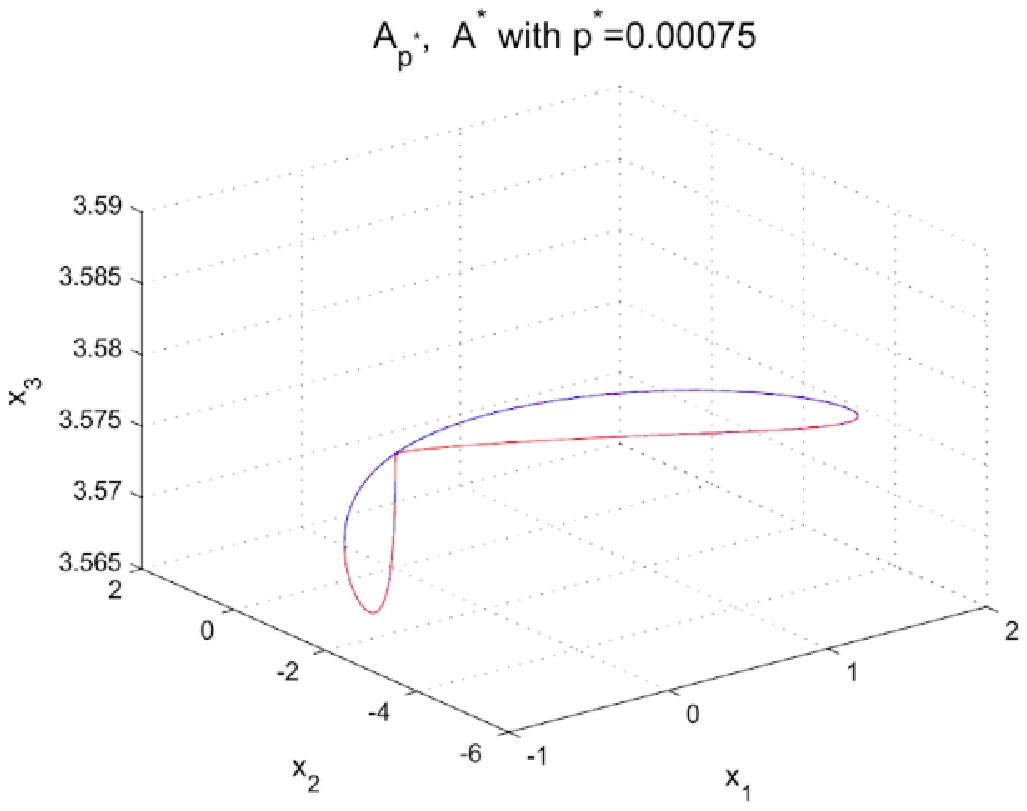}
\caption{AS algorithm for $N=10,$ $p_{i}=0.0002+i\times0.0001~$and
$m_{i}=1,$ for $i=1,2,\ldots,10;$ $A^{\ast}$ and
$A_{p^{\ast}},~$with $p^{\ast}=0.00075,$ are superimposed.}
\label{fig:4}       

\end{figure*}

\section{Synthesis of HR attractors}

\indent Real neurons often display high nonlinearity, which has been
shown in many experiments and confirmed by numerical simulation of
many neuron models, including the HR and Chay neuron
models\cite{HR,Chay}. Chaos is a universal phenomenon in certain
neurons, such as those in the human brain where, often, the
information is decoded and transduced by means of chaos. It is very
important to study the chaos of neuron from different aspects. Next,
we will focus on the neuronal firing pattern when $p$ is switched.

\noindent Firstly, it is easy to prove the following
\begin{proposition}
The HR system (\ref{HR}) with $p$ considered as control parameter, belongs to
the class of systems modeled by the IVP (\ref{IVPa})-(\ref{IVPb}).

\begin{proof}
Choosing the following substitution\newline%
\[
y_{1}=x_{1}-\overline{x}_{1}%
\]
\newline and replacing finally $y_{1}$ with $x_{1},$ (\ref{HR})
becomes\newline%
\begin{align}
\overset{.}{x}_{1}  &  =a_{1}x_{1}^{3}+b_{1}x_{1}^{2}+c_{1}x_{1}+d_{1}%
x_{2}+e_{1}x_{3}+f_{1}+I,\label{modified IVP}\\
\overset{.}{x}_{2}  &  =a_{2}x_{1}^{2}+b_{2}x_{1}+c_{2}x_{2}+d_{2},\nonumber\\
\overset{.}{x}_{3}  &  =p(sx_{1}-x_{3}),\nonumber
\end{align}
\newline where\newline\newline%
\begin{align*}
a_{1}  &  =-a,\\
b_{1}  &  =b-3a\overline{x}_{1},\\
c_{1}  &  =\overline{x}_{1}\left(  2b-3a\overline{x}_{1}\right)  ,\\
d_{1}  &  =1,\\
e_{1}  &  =-1,\\
f_{1}  &  =-a\overline{x}_{1}^{3}+b\overline{x}_{1}^{2},\\
a_{2}  &  =-d,\\
b_{2}  &  =-2d\overline{x}_{1},\\
c_{2}  &  =-1,\\
d_{2}  &  =d\overline{x}_{1}^{2}.
\end{align*}
\newline Thus, the right hand side of (\ref{modified IVP}) can be written
following (\ref{IVPb}) with%
\[
g(x)=\left(
\begin{array}
[c]{c}%
a_{1}x_{1}^{3}+b_{1}x_{1}^{2}+c_{1}x_{1}+d_{1}x_{2}+e_{1}x_{3}+f_{1}+I\\
a_{2}x_{1}^{2}+b_{2}x_{1}+c_{2}x_{2}+d_{2}\\
0
\end{array}
\right)
\]
\[
\text{ }A=\left(
\begin{array}
[c]{ccc}%
0 & 0 & 0\\
0 & 0 & 0\\
s & 0 & -1
\end{array}
\right) \text{ \rm and } x=\left(
\begin{array}
[c]{c}%
x_{1}\\
x_{2}\\
x_{3}%
\end{array}
\right)  .
\]
\end{proof}
\end{proposition}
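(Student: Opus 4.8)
The plan is to exhibit an explicit change of coordinates under which the right-hand side of (\ref{HR}) acquires the additive form $f_{p}(x)=g(x)+pAx$ demanded by (\ref{IVPb}). The first thing I would do is locate the obstruction: in (\ref{HR}) the parameter $p$ multiplies the expression $s(x_{1}-\overline{x}_{1})-x_{3}$, which is \emph{affine} rather than linear in $x$ because of the constant term $-ps\,\overline{x}_{1}$. Since (\ref{IVPb}) admits only a linear $p$-dependent contribution $pAx$ (no constant), this term must be absorbed, and the natural device is a translation of the offending coordinate.

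Accordingly, I would introduce $y_{1}=x_{1}-\overline{x}_{1}$, leaving $x_{2}$ and $x_{3}$ untouched, since the constant $\overline{x}_{1}$ enters (\ref{HR}) only through the first and third equations. Under this translation the third equation becomes $\dot{x}_{3}=p(s y_{1}-x_{3})$, which is exactly linear in the new variables with no constant term. Substituting $x_{1}=y_{1}+\overline{x}_{1}$ into the first two equations and expanding the cubic and quadratic monomials again produces polynomial right-hand sides, now with modified but $p$-independent coefficients; relabelling $y_{1}$ as $x_{1}$ for economy yields a system whose first two equations involve no $p$ at all and whose third equation is $\dot{x}_{3}=p(s x_{1}-x_{3})$.

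It then remains to read off $g$ and $A$. I would set $g(x)$ equal to the entire right-hand side of the first two (transformed) equations together with $0$ in the third component — a vector of polynomials in $x$, hence continuous on $\mathbb{R}^{3}$ — and take $A$ to be the $3\times3$ real matrix whose only nonzero row is the last one, $(s,0,-1)$, so that $pAx=(0,\,0,\,p(sx_{1}-x_{3}))^{T}$. Adding $g(x)+pAx$ then reproduces the transformed system exactly, establishing membership in the class (\ref{IVPa})--(\ref{IVPb}) with $n=3\geq3$ and $A$ constant, as required; Assumption A.1 is inherited because the transformed vector field is $C^{\infty}$.

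The step I expect to be the crux is the conceptual one — recognizing that a pure translation of a single coordinate is enough to clear the affine constant \emph{without} re-introducing $p$-dependence anywhere else (it does not, because $\overline{x}_{1}$ occurs only in $p$-free nonlinearities and in the one $p$-term). Once that is seen, the remaining work — expanding $(y_{1}+\overline{x}_{1})^{2}$ and $(y_{1}+\overline{x}_{1})^{3}$ to obtain the explicit polynomial coefficients, and collecting terms — is routine bookkeeping, and the identification of $g$ and $A$ is immediate.
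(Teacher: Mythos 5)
Your proposal is correct and follows essentially the same route as the paper: the same translation $y_{1}=x_{1}-\overline{x}_{1}$ to eliminate the constant term from the $p$-dependent equation, followed by the same identification of $g(x)$ as the $p$-free polynomial part and $A$ as the matrix with single nonzero row $(s,0,-1)$. The only difference is that the paper records the expanded coefficients $a_{1},\ldots,d_{2}$ explicitly, which you correctly defer as routine bookkeeping.
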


\begin{figure*}[!ht]
 \begin{center}
\includegraphics[clip,width=0.91\textwidth]{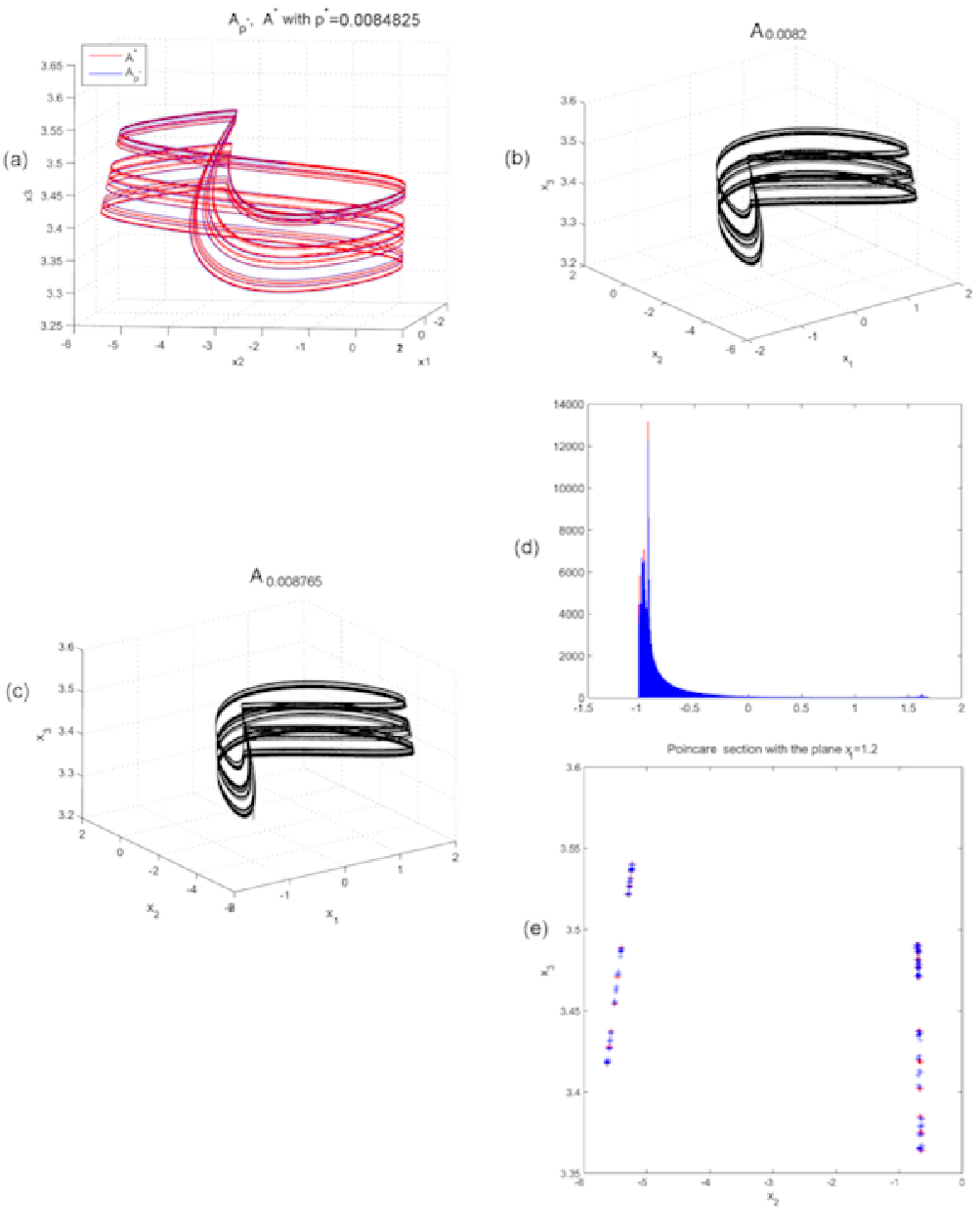}
\caption{AS algorithm for $N=2$ and
$\mathcal{P}_{N}=\{0.0082,0.008765\}$ and $m_{1}=m_{2}=1;$ a)
$A^{\ast}$ and $A_{p^{\ast}},~$with $p^{\ast}=0.0084825~,$
superimposed; b) $A_{0.0082};$ c) $A_{0.008765};$ d) Superimposed
histograms of $A^{\ast}~$and $A_{p^{\ast}};$ e) Superimposed
Poincar\'{e} sections of $A^{\ast}~$and $A_{p^{\ast}}~$for $\
x_{1}=1.2.$}
\label{fig:5}       
\end{center}
\end{figure*}

Because, from a numerical point of view, the mathematical model
(\ref{HR}) presents a more accessible form than (\ref{modified
IVP}),\ next we will work with (\ref{HR}).

The HR system verifies the assumptions A1, A2 and also A3 for large
parameters range taking into account that $\bigtriangledown\cdot f=-3ax_{1}^{2}%
+2bx_{1}-1-p.$ Therefore, one can apply the AS algorithm to the HR
system (\ref{HR}). For this purpose we considered the Standard
Runge-Kutta numerical method for ODEs with the fixed step size
$h=0.005$ and, as stated above, $I=3.4.$

\begin{figure*}[!t]
\includegraphics[clip,width=0.95\textwidth]{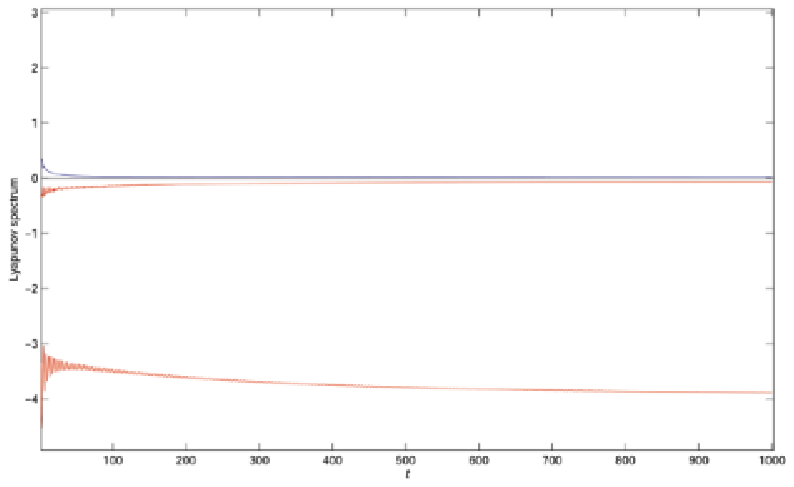}
\caption{Temporal evolution of Lyapunov exponents for
$p=0.0084825.$}
\label{fig:6}       
\end{figure*}

For the sake of the brevity, in what follows, only the most relevant
cases are considered. Let us firstly consider the case $N=2$ and
$\mathcal{P}_{N}=\{0.004,0.01\}$
(see Fig. 1b) with $m_{1}=m_{2}=1.$ Then, using the scheme $[1p_{1}%
,1p_{2}]$ while integrating the initial value problem (\ref{HR}),
finally, even the attractors $A_{0.004\text{ }}$and $\ A_{0.01}$ are
stable limit cycles, the obtained synthesized attractor $A^{\ast}$
is chaotic (Fig. 3a). As can be seen from Fig. 3a, the synthesized
attractor $A^{\ast}$ (blue) and
the averaged attractor $A_{p^{\ast}}$ (red) for $p^{\ast}%
=(0.004+0.01)/2=0.007~$ given by (\ref{p*}) coincide. It can be seen
that $p^{\ast}\in(p_{1},p_{2})$ (Fig. 1b) and also $A^{\ast}$ is
situated between the corresponding attractors $A_{p_{1}}$ and
$A_{p_{2}}.$

Next, if we use $N=10$ values for $p$ defined as follows: $p_{i}%
=0.0002+i\times0.0001~$and $m_{i}=1,$ for $i=1,2,\ldots,10,$ the
synthesized attractor $A^{\ast}$ is a stable limit cycle which
coincides with the averaged attractor $A_{p^{\ast}}$ with
$p^{\ast}=0.00075~$(Fig. 4).

An interesting case appears for $N=2$ and $\mathcal{P}_{N}%
=\{0.0082,0.008765\}$ and $m_{1}=m_{2}=1,$ where $p_{1},$ and
$p_{2}$ are chosen in a very narrow periodic window (its width is
about $1E-~4~$, see Fig. 1c, Fig. 5b and Fig. 5c). The obtained
attractor, $A^{\ast},$ is identical to $A_{p^{\ast}}$ for
$p^{\ast}=0.0084825~$(Fig. 5 a). We found numerically that the fixed
point $X^{\ast}~$is unstable for $p=p^{\ast}~$since the three
eigenvalues are: $-6.266,0.179$ and $0.02$ and, as showed in Fig. 6,
the Lyapunov spectrum has two negative exponents while the maximum
one is positive (approximately zero). All these lead us to consider
that $A_{p^{\ast}}$ (and consequently $A^{\ast})$ is a stable limit
cycle. The apparent difference between the attractors in this case
is due to the very small difference between $p_{1}$ and $p_{2}.$

\section{\smallskip Conclusions}

\noindent In summary, we have investigated the synthesis of
attractors of the HR neuronal system by means of the AS method. It
is shown that when we choose the slow parameter $p~$ as the control
parameter, the HR neuronal system belongs to a class of systems,
where we can apply the proposed switching method. Consequently, we
concluded that every attractor can be synthesized by the proposed
periodic parameter switching scheme in the HR neuron model. It is
accepted that neurons can code information by means of firing
patterns, which depends on the variation of key parameters of
neuronal systems. Hence, the results presented in this paper may be
instructive in understanding the implications of realistic neuronal
dynamics.

{\bf Acknowledgements} \,\, This work was supported by the National Science Foundation of China (Fund Nos. 10972001, 10702023 and 10832006).

\end{document}